\def\dunce{\;\raisebox{-12mm}{\includegraphics[width=36mm]{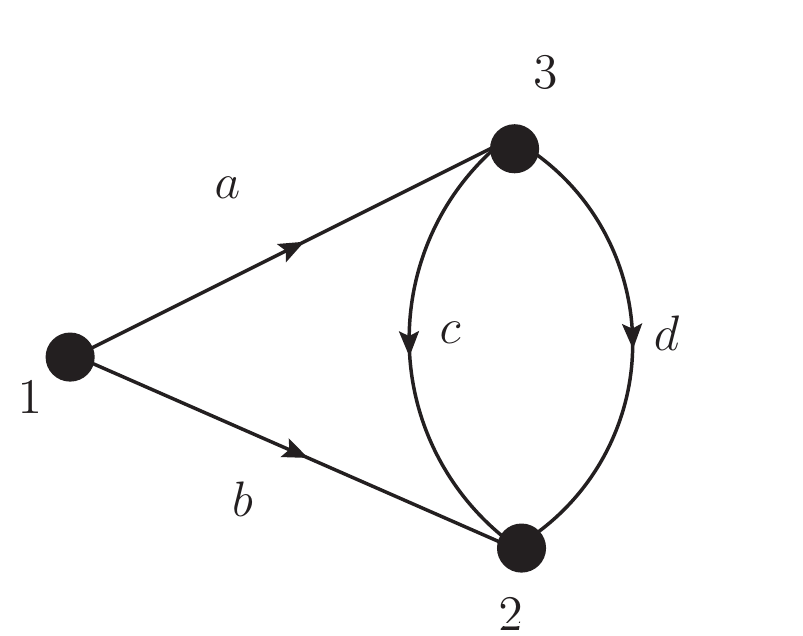}}\;}
\def\rk{\mathrm{rk}}
\newtheorem{thm}{Theorem}
\newtheorem{prop}[thm]{Proposition}
\theoremstyle{definition}
\newtheorem{definition}[thm]{Definition}
\title{Tensor structure from scalar Feynman matroids}
\author{Dirk Kreimer and Karen Yeats}
\thanks{Karen Yeats is supported by an NSERC Discovery grant, Dirk Kreimer
is supported by the Alexander von Humboldt Foundation
as an Alexander von Humboldt Professor,
endowed by the Federal Ministry of Education and Research, Germany.}
\begin{document}
\begin{abstract}
We show how to interpret the scalar Feynman integrals which appear when reducing tensor integrals as scalar Feynman integrals coming from certain nice matroids.
\end{abstract}
\maketitle

\section{Introduction}
Feynman integrals arise as evaluations of Feynman graphs by Feynman rules. 
In momentum space, these integrals $\int I_\Gamma$ are integrals over rational functions $I_\Gamma$
of internal loop momenta, 
with unit numerator in case of a scalar field theory. 

The tensor integrals appearing in the general case can be reduced to scalar integrals as well,
in particular, when we allow complex powers of propagators in the context of dimensional regularization
\cite{Tarasov}. Tensor integrals can then be reduced to integrals with unit numerator, on the expense 
of shifted dimensions and powers at propagators.

A basis of such scalar integrals which is sufficient to compute a desired amplitude to a given order
is often called a set of master integrals \cite{SmSm}. Integration by part (IBP) identities \cite{ChetTkach}
relate to tensor structure as in \cite{Laporta}, for example. The study of such master integrals, typically with arbitrary
complex powers of propagators, are the bread and butter of computational particle physics.
Such master integrals are not straightforwardly related to graphs though. 

Indeed, consider a Feynman integral arising from some graph, 
with a denominator given by a product of scalar propagators $P_{k_e}=1/(k_e^2+m_e^2)$ for each edge $e$.
With tensor structure, we will have scalar products $k_e\cdot k_j$ in the numerator.
Assuming that propagators $P_{k_e},P_{k_j},P_{k_f}$, $k_f=k_e-k_j$, appear in the denominator, we could resolve
\begin{equation}
 2 k_e\cdot k_j=P_{k_e}+P_{k_j}-P_{k_f}-m_e^2-m_j^2+m_{f}^2,
\end{equation}
eliminating the scalar product in favor of scalar integrals with possibly fewer propagators. 
The latter then correspond to 
graphs where an edge is contracted to a point.

But those propagators might not be present in the denominator. 
Then, the combinatorial interpretation in terms of graphs is missing.
Nevertheless, general products of propagators, in the denominator or numerator, 
even more generally with arbitrary complex powers, 
have proven useful in practical computations \cite{L09,L10}.
Indeed, any tensor integral can be reduced to a scalar integral on the expense of having a sufficiently general 
product of propagators at hand. 
But the combinatorial interpretation alluded to above in terms of graphs is not available when the product under consideration does not 
configure a graph.

If it does, we have the important tools of parametric representations via Kirchhoff polynomials available.
Being compatible with the raising of propagators to non-integer powers, 
these polynomials allow for systematic insights into the algebraic geometric properties of Feynman amplitudes 
and a satisfying mathematical understanding of these periods and functions 
\cite{Brbig,BrS,BrY,bek}.

Here, we answer the question what replaces such graph polynomials in the general case, 
when the product of propagators to start with does not configure a graph.
We will see that the notion of a graph is replaced by more general notion of matroid,
with the notion of one-particle irreducibility most crucially still being intact.
Hence, a systematic way to cope with tensor structure of Feynman graphs is to switch to what 
we might dub scalar Feynman matroids. 
\section{Matroids}

\subsection{Definitions}

There are many equivalent ways to define matroids.  The most useful for our purposes is the circuit definition.  A standard reference for matroid theory is \cite{ox}.

\begin{definition}
  A \emph{matroid} consists of a finite set $E$ and a set $\mathcal{C}$ of subsets of $E$ satisfying
  \begin{enumerate}
    \item $\emptyset \not\in \mathcal{C}$
    \item If $C_1, C_2 \in \mathcal{C}$ and $C_1\subseteq C_2$ then $C_1=C_2$.
    \item If $C_1, C_2 \in \mathcal{C}$, $C_1\neq C_2$ and $e\in C_1\cap C_2$ then there is a $C_3 \in \mathcal{C}$ with $C_3\subseteq (C_1\cup C_2)\smallsetminus e$.
  \end{enumerate}
\end{definition}

The set $E$ corresponds to the set of edges of a graph.  The elements of $\mathcal{C}$ are called the \emph{circuits} of the matroid and correspond to cycles (with no repeated vertices) of a graph.  Thus, in the case of graphs, the first two axioms state the obvious facts that the empty set is not a cycle and that no cycle contains a smaller cycle.  The third axiom is the interesting one and for graphs says that if two cycles share an edge, then together but without the edge they must form or contain a cycle.

Consequently, a graph defines a matroid, called the \emph{cycle matroid} of the graph, and any matroid which is the cycle matroid of some graph is called a \emph{graphic matroid}.

Another perspective is through the incidence matrix of a graph.  Take a graph $G$ with $e$ edges and $v$ vertices, and direct the edges.  The incidence matrix of $G$ is the $v\times e$ matrix $B=(b_{ij})$ with
\[
  b_{ij} = 
  \begin{cases} -1 & \text{ edge $j$ begins at vertex $i$} \\
    1 & \text{ edge $j$ ends at vertex $i$} \\
    0 & \text{ otherwise}
  \end{cases}
\]
For an example see equation \ref{Dirk matrix}.  To view this matrix as a matroid, let $E$ be the set of columns and let a circuit be a set of columns which is linearly dependent but with every proper subset linearly independent.  Any matrix can be viewed as a matroid in this way, and such matroids are called \emph{representable} (over the real numbers).  Note that all graphic matroids are representable.

Note also that elementary row operations on the matrix do not change the matroid, nor does removing any rows of all zeroes or scaling columns by nonzero scalars.  If we label the columns then we can also swap columns along with their labels without changing the matroid.  

One very nice property of matroids is that every matroid has a dual which generalizes the graph dual for planar graphs.  It will suffice for us to describe how to calculate the dual for representable matroids\footnote{see for instance \cite{ox} chapter 2, for the general definition and the proofs that the matroid dual generalized the planar dual of graphs and that the given representable calculation is well defined.}.  Take a representable matroid with matrix $M$.  Row reduce $M$ swapping columns and removing zero rows until it has the form
\[
  (I_n|D)
\]
where $I_n$ is the $n \times  n$ identity matrix and $D$ is $n\times m$.
Then the dual matroid is represented by the matrix 
\[
  (-D^T|I_m).
\]

Matroids which are the duals of graphic matroids are called \emph{cographic} matroids.  

The matroid of a graph cannot distinguish between a graph which is disconnected and one with the same components connected only at a vertex.  This is very natural for quantum field theory since the Feynman integral also can't make this distinction.  Despite this ambiguity, the notion of one-particle irreducible is well defined for matroids; a matroid $M = (E, \mathcal{C})$ is 1PI if it is \emph{bridgeless}, that is every $e\in E$ is in at least one circuit.

\subsection{Regular matroids and unique representability}

As noted above matrices which differ by elementary row operations, by nonzero column scalings and column swaps, and by rows of all zeroes represent the same matroid\footnote{We can also add to this list the action of any field automorphism, should we be working over a field which, unlike $\mathbb{R}$, has nontrivial automorphisms.}.  However, given a representable matroid $M$ there may be matrices which represent $M$ but which are \emph{inequivalent} using the above operations.  It is only in very nice cases that $M$ is \emph{uniquely representable}, that is, it has no inequivalent representations.  Fortunately some of these nice cases are central to our applications.

A matroid is \emph{regular} if it can be represented by a \emph{totally unimodular} matrix, that is a matrix for which every submatrix has determinant $-1$, $0$, or $1$.  Graphic matroids are examples of regular matroids since the incidence matrix is totally unimodular.  Cographic matroids are also regular.  Another characterization of regular matroids is that they are representable over any field, in fact any totally unimodular representation will do this.

Some results on unique representability.  
\begin{prop}\label{prop unique}
Let $I_r$ be the $r \times r$ identity matrix and let $\mathbb{F}_q$ be the finite field with $q$ elements.
\begin{itemize}
  \item Regular matroids are uniquely representable over every field.
  \item If $M$ is a matroid which is representable over $\mathbb{F}_2$ and over a field $F$ then $M$ is uniquely representable over $F$.
  \item Let $F$ be a field and let $(I_r|D_1)$ and $(I_r|D_2)$, matrices over $F$ with labelled columns, represent the same matroid with the same labelling.  Suppose that every entry of $D_1$ and $D_2$ is $0$ or $\pm 1$, then $(I_r|D_1)$ and $(I_r|D_2)$ are equivalent representations (Brylawski and Lucas 1976, see \cite{ox} Theorem 10.1.1).
  \item If $M$ is a matroid which is 3-connected
and representable over $\mathbb{F}_2$ and a field $F$ but not $\mathbb{F}_4$ then $M$ is uniquely representable over $F$ \cite{Wgf3}.
\end{itemize}
\end{prop}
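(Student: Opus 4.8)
The plan is to treat the four assertions as a dependent chain: the last item is a deep theorem of Whittle which I would cite, the third is the Brylawski--Lucas lemma (already attributed to \cite{ox}, Theorem 10.1.1) which I would take as the combinatorial engine, and the first two I would deduce from it. First I would record the two facts that make all of this a statement about a single matrix. Fix a basis $B$ of $M$ and write any $F$-representation in standard form $(I_r\,|\,D)$, with rows indexed by $B$ and columns of $D$ indexed by $E\smallsetminus B$. The zero/nonzero pattern of $D$ is then forced by $M$ alone: the nonzero entries in the column of $e$ are exactly the basis elements lying in the fundamental circuit of $e$ with respect to $B$ (since that column expresses $e$ as $\sum_i D_{ij}b_i$), and fundamental circuits are a matroid invariant. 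Second, two standard representations with respect to the same labelled basis are equivalent, in the sense of the allowed operations (row operations, column scalings, column swaps, field automorphisms), if and only if $D_2 = R D_1 S$ for nonsingular diagonal matrices $R$ (basis/row scalings) and $S$ (column scalings), possibly after a field automorphism, since rescaling a basis column must be compensated by rescaling its row to preserve $I_r$.

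Next I would convert the existence of $R$ and $S$ into a combinatorial condition. Form the bipartite graph $H$ on the rows and columns of $D$, with an edge for each nonzero entry; the equation $r_i (D_1)_{ij} s_j = (D_2)_{ij}$ on the support is solvable for the scalars $r_i,s_j$ precisely when, around every cycle of $H$, the alternating product of the ratios $(D_2)_{ij}/(D_1)_{ij}$ equals $1$. Equivalently, $(I_r|D_1)$ and $(I_r|D_2)$ are equivalent if and only if $D_1$ and $D_2$ have the same \emph{cycle products}. The crux is that each cycle product is itself a matroid invariant: a cycle of $H$ of length $2k$ singles out a $k\times k$ submatrix of $D$ whose support is a single cyclic pattern, hence with exactly two nonzero transversals, and whether this submatrix is singular is exactly the question of whether the corresponding $k$ basis and $k$ non-basis elements form a dependent set of $M$. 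When the entries are restricted to $0,\pm 1$ as in the third item, the cycle product must lie in $\{\pm 1\}$ and its sign is pinned down by this singularity condition, so $D_1$ and $D_2$ agree on every cycle; this is the content I would attribute to Brylawski--Lucas.

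For the second item I would reduce to the third by a normalization. Given $M$ representable over $\mathbb{F}_2$ and over $F$ and an $F$-representation $(I_r|D)$, choose a spanning forest $T$ of the support graph $H$ and use row and column scalings to set every entry of $D$ on $T$ equal to $1$; this involves no cycle and so is unobstructed. Binary representability --- equivalently the absence of a $U_{2,4}$ minor, the four-point line being exactly the configuration whose free cross ratio produces a cycle product outside $\{\pm 1\}$ --- forces every cycle product of $D$ to be $\pm 1$, and since each non-forest entry equals the product around its fundamental cycle, all remaining nonzero entries are then $\pm 1$. After this normalization any two $F$-representations have entries in $\{0,\pm 1\}$ and the third item applies, giving unique representability over $F$. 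The first item is then immediate: a regular matroid has a totally unimodular representation and is in particular representable over $\mathbb{F}_2$, so the second item applies over every field $F$.

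The main obstacle is twofold. The genuinely hard input is the fourth item: dropping $\mathbb{F}_2$-representability in favour of $3$-connectedness together with non-representability over $\mathbb{F}_4$ requires Whittle's classification of the inequivalent representations of $3$-connected matroids and the theory of stabilizers, which I would cite from \cite{Wgf3} rather than reconstruct. Within the elementary part, the delicate step is proving that the cycle products are matroid invariants with the correct sign; one must compute the $k\times k$ cyclic determinant as $\pm(p_1-p_2)$ in its two transversal products $p_1,p_2$ and read off singularity, and one must treat the characteristic $2$ case separately, where $p_1-p_2=p_1+p_2$ and the $\pm 1$ normalization collapses onto the $\mathbb{F}_2$ representation itself, so that uniqueness is inherited directly from the binary structure.
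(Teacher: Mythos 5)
The paper never actually proves this proposition: all four items are quoted from the literature, the third attributed to Brylawski and Lucas (\cite{ox}, Theorem 10.1.1), the fourth to Whittle \cite{Wgf3}, and the first two left as equally standard facts with no argument given. Your proposal therefore takes a genuinely different, more self-contained route: you expose the logical dependencies by citing only item four, treating item three as the engine, and deriving items one and two from it. That chain is sound and agrees with how these results are established in \cite{ox}: normalize a spanning forest of the bipartite support graph to $1$, use exclusion of $U_{2,4}$ minors to force all remaining entries into $\{0,\pm 1\}$ so that item three applies, and then get item one from item two because a totally unimodular matrix represents the matroid over $\mathbb{F}_2$ as well as over every other field. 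Your cycle-product sketch of item three is likewise the textbook argument, but it has two repairable gaps you should flag: (i) a cycle of the support graph that has chords does \emph{not} give a submatrix whose support is a single cyclic pattern, so the two-transversal determinant computation applies only to chordless (induced) cycles, and you need the standard supplement that a chord splits a cycle into two shorter ones whose products multiply, so agreement on induced cycles suffices; (ii) for item two, the $U_{2,4}$ cross-ratio obstruction directly handles only $4$-cycles, while longer induced cycles require the wheel/whirl reduction (a nonsingular cyclic-pattern submatrix yields a whirl minor, which in turn contains $U_{2,4}$). With those routine repairs your outline is complete. What the paper's pure-citation approach buys is brevity appropriate to its physics audience; what yours buys is a self-contained proof of the first three items in which only Whittle's theorem remains a black box, and it makes visible that items one and two are corollaries of the Brylawski--Lucas lemma rather than independent facts.
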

The precise meaning of \emph{3-connectivity}, which appears above, is not crucial for the following\footnote{To be complete, matroid is 3-connected if it cannot be written as a direct sum or as a 2-sum of two nonempty matroids.  If $M_1 = (E_1, \mathcal{C}_1)$ and $M_2 = (E_2, \mathcal{C}_2)$ are matroids with $E_1\cap E_2 = \emptyset$ then the \emph{direct sum} of $M_1$ and $M_2$ is the matroid $(E_1\cup E_2, \mathcal{C}_1 \cup \mathcal{C}_2)$.  If $E_1\cap E_2 = \{p\}$ where $p$ is not a circuit of $M_1$ or of $M_2$ and appears in at least one circuit of each, then the \emph{2-sum} of $M_1$ and $M_2$ is the matroid with underlying set $E_1\cup E_2 \smallsetminus \{p\}$ and with circuits all the circuits of $M_1$ and $M_2$ which do not contain $p$ along with all sets of the form $C_1\cup C_2 \smallsetminus \{p\}$ where $C_i$ is a circuit of $M_i$ which contains $p$.}.  For a simple graph with at least 4 edges, matroid 3-connectivity corresponds to graph 3-connectivity.

The analogue of a spanning tree for a matroid is called a \emph{base} or \emph{basis}.  A base of $M=(E, \mathcal{C})$ is a maximal subset of $E$ which contains no cycle.  In terms of a matrix of a representable matroid a base is, as expected, a basis of the column space of the matrix.  One of the alternate characterizations of matroids is in terms of the set of bases instead of the set of circuits, which shows that the bases carry the full information of the matroid.

Since we have a notion analogous to the spanning tree, we can hope to form the Kirchhoff polynomial or first Symanzik polynomial.  However, there is one subtlety.  In view of the matrix-tree theorem we have two characterizations of the first Symanzik polynomial $\Psi_G$ of a connected graph $G$ with edge variable $a_e$ associated to edge $e$, namely
\begin{equation}\label{matrix tree}
   \sum_{\substack{T\text{ spanning}\\\text{tree of }G}}\prod_{e\not\in T}a_e
  = \Psi_G = \det \begin{pmatrix} \Lambda & \widetilde{B}^T \\ -\widetilde{B} & 0 \end{pmatrix}
\end{equation}
where $\Lambda$ is the diagonal matrix of the $a_e$ and $\widetilde{B}$ is the incidence matrix of $G$ with any one row removed.  Alternate ways to write the second equality include
\begin{equation}\label{alternate}
  \Psi_G = (\prod_{e\in G} a_e)\left(\det\widetilde{B} \Lambda \widetilde{B}^T|_{a_e \leftarrow \frac{1}{a_e}, e\in G}\right) = \det\widetilde{B^*} \Lambda \widetilde{B^*}^T
\end{equation}
where $\widetilde{B^*}$ is as $\widetilde{B}$ but for the dual of $G$.  

For matroids the first equality of \eqref{matrix tree} is always available.  However, the usefulness of the Kirchhoff polynomial for quantum field theory comes from the second equality of \eqref{matrix tree} which is one way to convert momentum space integrals to Schwinger parametric integrals.   Thus, unlike in \cite{BW}, we will look to the second equality (equivalently to \eqref{alternate}) in order to define the first Symanzik polynomial for the matroids of interest to us.

Note that the incidence matrix of a connected graph has one more row than its rank, and so, to translate to matroids, $\widetilde{B}$ should correspond to a representing matrix of full rank.  For uniquely representable matroids, we have a unique (up to unproblematic transformations) matrix to use for $\widetilde{B}$.  However the subdeterminants which, when squared, become the coefficients of the monomials in the determinant are no longer necessarily $\pm 1$, and so we will obtain a variant of \eqref{matrix tree}, 
\begin{equation}\label{weights}
\Psi_M = \sum_{T}w_T\prod_{e\not\in T}a_e
\end{equation}
 with positive weights $w_T$ on the terms, where the sum runs over the bases of a uniquely representable matroid $M$.  In the case of regular matroids we have a totally unimodular matrix, and so we retain exactly the identity \eqref{matrix tree}.

Likewise, adding in external edges to carry the external momenta we can form the second Symanzik polynomial by forming the same polynomial from this larger matrix and then taking the terms which are quadratic in the external momenta.

Let us pause here and compare incidence matrices with configuration polynomials, see  \cite{Pa,BlK1loop} where
the two common graph polynomials are treated as special cases of configuration polynomials.
This is not the space for a detailed discussion how graph polynomials appear as configuration polynomials, suffice it
to say that the starting point is the short exact sequence 
$$ 0\to \mathbb{Q}^r\to \mathbb{Q}^E\xrightarrow{\partial}\mathbb{Q}^{V,0}\to 0,$$
for a graph with $r$ loops, edges $E$, vertices $V$ and $\mathbb{Q}^{V,0}$ being the image of the boundary map 
$\partial$
\cite{BlK1loop}. 

Let us actually be very explicit and start with the Dunce's cap graph, where we label oriented edges 
$e\in \{a,b,c,d\}$ and vertices $v\in\{1,2,3\}$ as follows.
$$ \Gamma=\dunce $$
Note that we consider the case of a momentum $q$ entering at vertex 1, a zero-momentum at vertex 3, and $-q$
at vertex two.
We then have the incidence matrix (the superscripts indicate a consistent choice of momentum flow
at the corresponding edges, which are indeed represented by columns)
\begin{equation}\label{Dirk matrix}
B_\Gamma=\bordermatrix{ 
  & \text{\tiny{$-\ell$}}& \text{\tiny{$\ell+q$}} & 
  \text{\tiny{$-\ell+k$}} & \text{\tiny{$-k$}}\cr
& -1 & -1 & 0 & 0 \cr
&  0 & 1 & 1 & 1 \cr
&  1 & 0 & -1 & -1  
}.
\end{equation}
Let $C_a,C_b,C_c,C_d$ be the columns for the edges. There are three circuits, any two of them forming a basis for the first Betti homology. 
For example, $-C_a+C_b-C_c=0=-C_a+C_b-C_d$ determine two circuits as solutions of 
\[
 \sum_{e\in\{a,b,c,d\}}w_e C_e=0,
\]
 with coefficients in $\{-1,0,1\}$, as it befits a proper graph.

Following \cite{BlK1loop}, we then get the first graph polynomial as the determinant of the two by two matrix 
\[
\mathrm{det}\bordermatrix{ & & \cr
& a+b+c & a+b \cr
&  a+b & a+b+d 
}=(a+b)(c+d)+cd.
\]
Here, edge labels denote the variables, a chosen basis of circuits determines the diagonal entries, 
and the off-diagonal entries $(i,j)$ are determined by the edges common to circuits $C_i,C_j$.

Similarly, the second graph polynomial is also a configuration polynomial which can be obtained as the Pfaffian norm $N_{rp}$
or Moore determinant of the matrix
\cite{BlK1loop}
\[
N:=\bordermatrix{ & & \cr
& a+b+c & a+b & a\mu_a+b\mu_b+c\mu_c\cr
&  a+b & a+b+d & a\mu_a+b\mu_b+d\mu_d\cr
& a\bar{\mu_a}+b\bar{\mu_b}+c\bar{\mu_c} & a\bar{\mu_a}+b\bar{\mu_b}+d\bar{\mu_d} &
a\bar{\mu_a}\mu_a+b\bar{\mu_b}\mu_b+c\bar{\mu_c}\mu_c+d\bar{\mu_d}\mu_d
},
\]
with
\begin{eqnarray*}
 N_{rp}(N) & = & ((c + d)\underbrace{\overline{(\mu_a - \mu_b)}(\mu_a - \mu_b)}_{=q_1^2}  b\\ & &  
+ \underbrace{\overline{(\mu_a - \mu_c - \mu_d)}(\mu_a - \mu_c - \mu_d)}_{=q_3^2} cd)a + 
\underbrace{\overline{(\mu_b - \mu_c - \mu_d)}
(\mu_b - \mu_c - \mu_d)}_{=q_2^2} bcd,\nonumber
\end{eqnarray*}
writing four-momenta $q_i$ entering at vertices $i$ as quaternionic matrices \cite{BlK1loop}.
\section{Encoding tensor structure with matroids}

\subsection{From Feynman integral to matroid}

For mathematical physics, the value of a mathematical structure must be in what it can \emph{do}.  There are three main ways in which matroids are better than graphs for representing Feynman integrals.  

First, and most trivially, Feynman graphs are redundant in that non-isomorphic graphs can give the same Feynman integral.  For instance, graphs which differ by splitting at two vertices and rejoining with a twist have the same Feynman integral.  Matroids remove this redundancy exactly: graphs corresponding to the same integrand give the same matroid and vice versa, see \cite{BW}.

Second, if we allow matroids then every graph has a dual.  This should be useful wherever the planar dual is currently useful.  In particular it should be useful for the question of understanding the periods of high loop, massless, primitive $\phi^4$ graphs where the tools, \cite{BKphi4, Sphi4, Brbig, BrY}, can predominantly be translated over to matroids.

Third, and the topic of this paper, we can use matroids to give an easy and natural way to represent Feynman integrals with arbitrary numerator structure as scalar matroid integrals with appropriate powers of the propagators appearing. Consider as an example a Feynman graph with the following topology.
\[
  \includegraphics{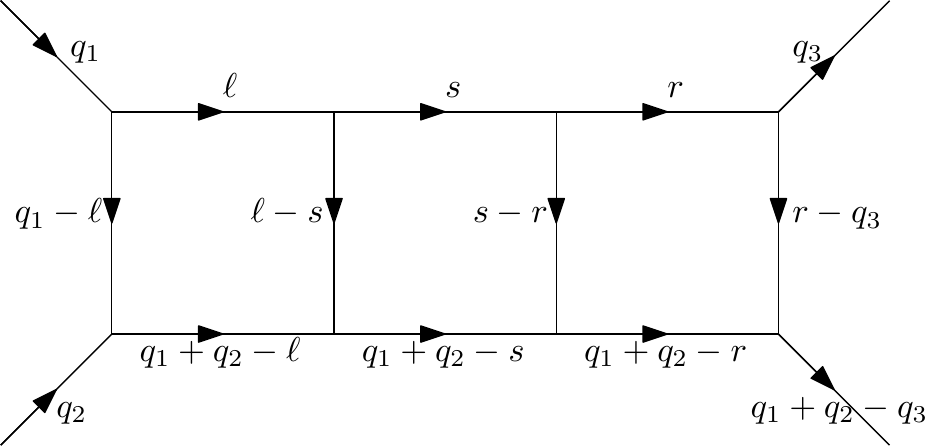}
\]
  The Feynman integral is of the form
\[
\int \frac{d^4\ell d^4s d^4r  }{\ell^2(\ell-s)^2s^2(s-r)^2r^2(q_1-\ell)^2(q_1+q_2-\ell)^2(q_1+q_2-s)^2(q_1+q_2-r)^2(r-q_3)^2}
\]
We understand that the propagators given might be raised to complex powers. This modifies the measure against which the graph polynomials are integrated, but does not modify the graph polynomial as a configuration polynomial.

If a similar topology is realized with quantum fields which do not sit in the trivial representation of the Poincar\'e group, we get tensor structure in the numerator as discussed in the 
introduction. For example, assume there is a $\ell\cdot r$ in the numerator.
  We are then  left with integrals of the form (general complex powers of propagators are still understood and they are typically shifted by integers when we resolve tensor structure in this manner)
\begin{equation}\label{big denom}
  \int \frac{d^4\ell d^4s d^4r }{\ell^2(\ell-s)^2s^2(s-r)^2r^2(q_1-\ell)^2(q_1+q_2-\ell)^2(q_1+q_2-s)^2(q_1+q_2-r)^2(r-q_3)^2(\ell-r)^2}
\end{equation}
The configuration of propagators given does not correspond to any graph.  
However, it does correspond to a matroid.  

We can see this by simply writing down a matrix which gives the desired matroid.  Each factor of the denominator of \eqref{big denom} is a generalized edge, that is an element of $E$, and thus a column in the matrix.  In the incidence matrix of a graph, the rows correspond to vertices.  If we modify the matrix by elementary row operations then the rows no longer correspond to vertices, but they remain sets of edges with momentum conservation.  For a matroid, vertices are no longer a well-defined concept, so we can only say that the rows are sets which conserve momentum.  With this in mind we can construct a matrix for the matroid $M$ of the denominator of \eqref{big denom}.
\begin{equation}\label{big matrix}
\bordermatrix{ 
  & \text{\tiny{$\ell$}}& \text{\tiny{$\ell-s$}} & 
  \text{\tiny{$s$}} & \text{\tiny{$s-r$}} & 
  \text{\tiny{$r$}} & 
  \text{\tiny{$q_1-\ell$}} & \text{\tiny{$q-\ell$}} & 
  \text{\tiny{$q-s$}} & \text{\tiny{$q-r$}} & 
  \text{\tiny{$r-q_3$}} & \text{\tiny{$q_1$}} & 
  \text{\tiny{$q_2$}} & 
  \text{\tiny{$q_3$}} & \text{\tiny{$q-q_3$}} & 
  \text{\tiny{$\ell-r$}} \cr
&  1 &-1 &-1 & 0 & 0 & 0 & 0 & 0 & 0 & 0 & 0 & 0 & 0 & 0 & 0 \cr
&  0 & 0 & 1 &-1 &-1 & 0 & 0 & 0 & 0 & 0 & 0 & 0 & 0 & 0 & 0 \cr
&  0 & 1 & 0 & 0 & 0 & 0 & 1 &-1 & 0 & 0 & 0 & 0 & 0 & 0 & 0 \cr
&  0 & 0 & 0 & 1 & 0 & 0 & 0 & 1 &-1 & 0 & 0 & 0 & 0 & 0 & 0 \cr
& -1 & 0 & 0 & 0 & 0 &-1 & 0 & 0 & 0 & 0 & 1 & 0 & 0 & 0 & 0 \cr
&  0 & 0 & 0 & 0 & 0 & 1 &-1 & 0 & 0 & 0 & 0 & 1 & 0 & 0 & 0 \cr
&  0 & 0 & 0 & 0 & 1 & 0 & 0 & 0 & 0 &-1 & 0 & 0 &-1 & 0 & 0 \cr
&  0 & 0 & 0 & 0 & 0 & 0 & 0 & 0 & 1 & 1 & 0 & 0 & 0 &-1 & 0 \cr
&  -1 & 0 & 0 & 0 & 1 & 0 & 0 & 0 & 0 & 0 & 0 & 0 & 0 & 0 & 1 
}
\end{equation}
where $q=q_1+q_2$.
The original graph corresponds to removing the final row and column.  This says that the original graph is $M$ with the new element corresponding to $(\ell-r)^2$ contracted.  The process of reversing a contraction in a matroid is called \emph{coextension}; it is not unique, but one element coextensions are completely characterized (see \cite{Bmat} pp156--158).

Further consider the circuits of $M$. Label the factors of \eqref{big denom} from left to right $a_1,\ldots, a_{11}$.  Take any cycle $C$ of the original graph.  Change variables in the internal momenta so that $C$ is indexed by one of the new internal momentum variables, $t$.  $C\cup \{a_{11}\}$ is a circuit of $M$ if $\ell-r$ depends on $t$ when written in the new variables, and $C$ itself is a circuit of $M$ otherwise.  This gives the circuits
\begin{gather}\label{raw cycles}
  \{a_1,a_2,a_6,a_7,a_{11}\},\{a_2,a_3,a_4,a_8\},\{a_4,a_5,a_9,a_{10},a_{11}\},\\ \{a_1,a_3,a_4,a_6,a_7,a_8,a_{11}\},\{a_2,a_3,a_5,a_8,a_9,a_{10},a_{11}\},\{a_1, a_3,a_5,a_6,a_7,a_8,a_9,a_{10}\}
\end{gather}
Comparing this to the original graph
\begin{gather*}
  \{a_1,a_2,a_6,a_7\},\{a_2,a_3,a_4,a_8\},\{a_4,a_5,a_9,a_{10}\},\\ \{a_1,a_3,a_4,a_6,a_7,a_8\},\{a_2,a_3,a_5,a_8,a_9,a_{10}\},\{a_1, a_3,a_5,a_6,a_7,a_8,a_9,a_{10}\}  
\end{gather*}
$a_{11}$ has simply been removed, which is another way to see that the original graph is $M$ with $a_{11}$ contracted.  

To include the information from the external momenta we can further put the external edges $e_1, e_2, e_3, e_4$ into $E$ and include as circuits (cycles through infinity) the momentum flow of the external momenta, and other possible external momentum flows coming from changes of variables.  $a_{11}$ will be included or not included in the circuits of $M$ by the same criterion as above.

This still does not give all the circuits of $M$.  Other circuits can come from pairs of cycles in the original graph which are joined in the coextension.  The following small graph example illustrates the situation.  Let
\[
  G = \raisebox{-1cm}{\includegraphics{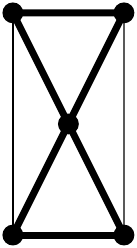}} \qquad H = \raisebox{-1cm}{\includegraphics{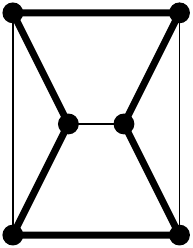}}
\]
Then $H$ is a coextension of $G$, and the pair of cycles in $G$ indicated by the fat lines become a single cycle in $H$.
Fortunately, given \eqref{raw cycles}, the axioms determine the remaining circuits, as if $C_1$ and $C_2$ are circuits of $M$ with $C_1\cap C_2 = \{a_{11}\}$.  Then $D = C_1\cup C_2 \smallsetminus \{a_{11}\}$, which is a pair of cycles in the original graph, must contain a circuit of $M$.  If either cycle of $D$ is itself a circuit of $M$ then $D$ cannot itself be a circuit of $M$ by the second axiom.  If neither cycle of $D$ is itself a circuit of $M$ then by the third axiom $D$ must be a circuit of $M$.  In the above case we add only the circuit
\[
  \{a_1,a_2,a_6,a_7,a_4,a_5,a_9,a_{10}\}
\]
Together this is all the circuits of $M$.

Note that we never used any graph specific properties of the original graph, and so we can iterate this procedure to remove all dot product factors from the numerator.  That we can keep the new matrices nice in the following sense
is our main result:

\begin{thm}\label{prop nice}
  Let $G$ be a graph and $P$ a set of pairs of edges of the graph (these are the pairs whose momenta appear dotted together in the numerator).  Applying the above construction with appropriate choices, we obtain a matrix row equivalent to
\begin{equation}\label{matrix form}
  \begin{pmatrix} I_{\rk G} & 0 & C \\ 0 & I_{r} & D \end{pmatrix}
\end{equation}
where $(I_{\rk G} | C)$ represents the matroid of $G$, $r\leq |P|$, and all entries of $C$ and $D$ are $0$ or $\pm 1$.
\end{thm}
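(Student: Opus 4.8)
The plan is to prove the statement by induction on $|P|$, building the matrix one coextension at a time and keeping it totally explicit. For the base case $P=\emptyset$ the matroid is just $M(G)$, which is graphic and hence regular; its incidence matrix is totally unimodular, so row reducing with respect to a choice of spanning tree puts it in the form $(I_{\rk G}\mid C)$ with every entry of $C$ in $\{0,\pm1\}$. This is \eqref{matrix form} with $r=0$ and supplies the block $(I_{\rk G}\mid C)$ representing $M(G)$.

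For the inductive step I would process one pair $(e_i,e_j)\in P$ at a time. The new propagator carries momentum $k_{e_i}-k_{e_j}$, so I adjoin one new column for the new edge and one new row recording the relation $k_{e_i}-k_{e_j}-k_{\text{new}}=0$; in the unreduced incidence picture this row has entries only in the columns of $e_i$, $e_j$ and the new edge, all equal to $0$ or $\pm1$, and the new column is a unit vector supported on the new row (compare the bottom row and last column of \eqref{big matrix}). If $k_{e_i}-k_{e_j}$ agrees up to sign with the momentum already carried by some edge, the new element is parallel to an existing one and I simply repeat that column, leaving the rank unchanged; otherwise the rank goes up by exactly one. Summing over the pairs this gives $r\le |P|$, with the rank-increasing new edges furnishing the $I_r$ block. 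Because the new column vanishes on the old rows and each new row is cleared only against the old rows, the top $\rk G$ rows restricted to the non-new columns are never disturbed, so the surviving block is still $(I_{\rk G}\mid C)$ and still represents $M(G)$, as claimed.

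The one real difficulty---and the place where ``appropriate choices'' is doing all the work---is keeping every entry of $C$ and $D$ in $\{0,\pm1\}$ once the new rows are reduced into block form. Clearing a new row against the identity columns coming from tree edges adds a row of $C$ to it, and this can collide with the $\pm1$ already sitting in the column of the other paired edge to produce a forbidden $\pm2$. The matroid $M$ need not be regular, so I cannot hope for a single representation in which \emph{every} circuit has a $\pm1$ dependence; the point is that I only need the \emph{fundamental} circuits with respect to the chosen basis to be cycle-like. The device I would use is to route the internal momenta, equivalently to choose the spanning tree, so that for each processed pair both $e_i$ and $e_j$ are non-tree (loop) edges: then the new row already has zeros in all tree columns, it pivots cleanly on its own new column, and its entries in $D$ are inherited directly from the $\pm1$'s at the columns of $e_i$ and $e_j$.

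The main obstacle is therefore to make these routing choices \emph{simultaneously consistent} for all pairs while still pinning every previously adjoined edge into the basis. I would handle this by re-reducing after each coextension, using the explicit cycle-like description of the circuits in \eqref{raw cycles} to certify that the fundamental dependences stay $\pm1$, and by appealing to the Brylawski--Lucas uniqueness in Proposition \ref{prop unique}, so that once a $0,\pm1$ reduced form is produced it is forced and the separate choices made along the way cannot conflict.
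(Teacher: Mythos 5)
Your base case (row reducing the totally unimodular incidence matrix to $(I_{\rk G}\mid C)$), your discarding of redundant pairs to get $r\le|P|$, and your handling of the case where both paired edges lie in the $C$ part all coincide with the paper's proof. But there is a genuine gap at exactly the point you flag as ``the main obstacle,'' and neither of your proposed fixes closes it. First, the device of routing momenta so that both edges of each pair are non-tree edges is simply not available in general: every spanning tree must meet $\{e_1,e_2\}$ whenever that pair contains a bridge or forms a $2$-edge cut, and more crudely the corank of $G$ bounds how many edges can be non-tree at once (in a one-loop graph only one edge ever is), while previously adjoined coextension elements must stay pinned in the basis, further constraining the choice. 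So the cases you are trying to legislate away --- one or both paired edges sitting in the $I_{\rk G}$ block --- are unavoidable, and they are precisely the cases where a naive clearing step can produce a $\pm 2$. Second, the appeal to Brylawski--Lucas is circular: that result compares two \emph{already given} standard-form representations with entries in $\{0,\pm1\}$ and declares them equivalent; it cannot certify that such a representation \emph{exists}, which is the entire content of the theorem. Likewise, the circuit list \eqref{raw cycles} is specific to the worked example and gives no general certification mechanism.

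What is missing is the paper's key lemma, which is what makes the hard cases tractable rather than avoidable: if a matrix with entries in $\{0,\pm1\}$ represents the same matroid over both $\mathbb{R}$ and $\mathbb{F}_2$, then it has no $2\times2$ subdeterminant equal to $\pm2$, since such a subdeterminant would force a $U_{2,4}$ minor, impossible for a binary matroid. Consequently, for two rows sharing a nonzero column, the combination clearing that column again has all entries in $\{0,\pm1\}$ and still represents the matroid over both fields. With this in hand, the paper handles the tree-edge cases directly: when one or both of $i,j$ index columns of $I_{\rk G}$, it writes down a coextension row whose signs are matched to the entries of $C$ (e.g.\ $v(i)=r_{i,k}$, $v(j)=-r_{j,k}$ when rows $R_i,R_j$ of $C$ share a nonzero column $k$), and then the lemma guarantees that the row reduction needed to restore the block form \eqref{matrix form} deposits a row $-r_{i,k}R_i+r_{j,k}R_j$ below $D$ with entries still in $\{0,\pm1\}$. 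Your proposal contains no substitute for this argument, so as written it fails on any input where a pair cannot be made simultaneously non-tree.
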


\begin{proof}
  We need a few facts first.  If $A$ is a matrix with entries in $\{-1, 0, 1\}$ which represents a matroid $M$ both over $\mathbb{R}$ and over $\mathbb{F}_2$ then $A$ has no $2\times 2$ subdeterminant equal to $\pm 2$.  This is because otherwise $M$ would have a minor isomorphic to the matroid $U_{2,4}$, which is the matroid represented by 
\[
\begin{pmatrix} 1 & 0 & 1 & 1 \\ 0 & 1 & 1 & -1 \end{pmatrix}.
\]
But this is not possible for a matroid representable over $\mathbb{F}_2$. 

Next, if $A$ is a matrix with entries in $\{-1, 0, 1\}$ which represents a matroid $M$ both over $\mathbb{R}$ and over $\mathbb{F}_2$, and $R_1, R_2$ are two rows of $A$ which both have a nonzero entry in column $i$ then the linear combination of $R_1$ and $R_2$ with zero entry in column $i$ has all entries $0$ or $\pm 1$, and replacing $R_1$ by this new row still represents $M$ over both $\mathbb{R}$ and $\mathbb{F}_2$.

To see this, suppose otherwise.  Write $R_i = (r_{i,1}, \ldots , r_{i,m})$.  The only possible other entry is $\pm 2$.  If $\pm 2$ appears in column $j$ then $r_{1,j}r_{2,i}-r_{2,j}r_{1,i}  = \pm 2$ and the $2 \times 2$ matrix made of columns $i$ and $j$ of $R_1$ and $R_2$ has determinant $\pm 2$ which is impossible.  Replacing $R_1$ by this new row only involves scaling by $\pm 1$ and so this new matrix still represents $M$ both over $\mathbb{R}$ and over $\mathbb{F}_2$.  

Now returning to the original problem, row reduce the incidence matrix of $G$ using the above tools and remove any zero rows to obtain a matrix of the form $(I_{\rk G} | C)$ with the entries of $C$ in $\{0, \pm 1\}$ and which represents the matroid of $G$ over $\mathbb{R}$ and $\mathbb{F}_2$.

We inductively coextend for each element of $P$.  The base case is above.  Suppose we have a matrix of the form 
\[
 A =  \begin{pmatrix} I_{\rk G} & 0 & C \\ 0 & I_{r} & D \end{pmatrix}
\]
and a pair of edges $(e_1, e_2)$ of $G$.  If any edge of $G$ or new edge which we have already added from $P$ is already a linear combination with both weights nonzero of the momenta of $e_1$ and $e_2$ then we don't need a new edge and so just discard this pair.

Otherwise, $e_1$ and $e_2$ correspond to two column indices $i$ and $j$ which are either in the $I_{\rk G}$ part or the $C$ part of $A$.  If $i$ and $j$ are both in the $C$ part then coextend $A$ to
\[
\begin{pmatrix} I_{\rk G} & 0 & 0 & C \\ 0 & I_{r} & 0 & D  \\ 0 & 0 & 1 & v\end{pmatrix}
\]
where $v$ has either $\pm 1$ in positions $i$ and $j$ and zero elsewhere.  

If $i$ is in the $I_{\rk G}$ part and $j$ is in the $C$ part then let $\epsilon$ be the $i,j$ entry of $A$.  Build the new matrix
\[
A' = \begin{pmatrix} I_{\rk G} & 0 & 0 & C \\ 0 & I_{r} & 0 & D  \\ v_1 & 0 & 1 & v_2\end{pmatrix}
\]
where $v_1$ has $-1$ in position $i$ and $0$ elsewhere and $v_2$ has 
  $\begin{cases} \pm 1 & \text{if } \epsilon = 0 \\ -\epsilon & \text{otherwise}
  \end{cases}$ in position $j$ and $0$ elsewhere.
Then we can row reduce $A'$ by adding row $i$ to the last row and obtain a matrix in the form of \eqref{matrix form}.

Finally if both $i$ and $j$ are in the $I_{\rk G}$ part then we will make a matrix of the form
\[
A' = \begin{pmatrix} I_{\rk G} & 0 & 0 & C \\ 0 & I_{r} & 0 & D  \\ v & 0 & 1 & 0\end{pmatrix}
\]
with $v$ nonzero only in entries $i$ and $j$.  Call these entries $v(i)$ and $v(j)$. Let $R_i$ and $R_j$ be rows $i$ and $j$ of $C$.  If there is no column for which $R_i$ and $R_j$ are both nonzero then any choice of $\pm 1$ is fine for $v(i)$ and $v(j)$.  If $R_i$ and $R_j$ are both nonzero in column $k$ then let $v(i)=r_{i,k}$ and $v(j) = -r_{j,k}$.  Then row reducing $A'$ to clear entries $v(i)$ and $v(j)$ puts the row $-r_{i,k}R_i+r_{j,k}R_j$ below $D$, but by the facts we initially observed this also has all entries $0$ or $\pm 1$ and so we again obtain a matrix in the required form.
\end{proof}

Consider the big example again.  In the big example we have only added one non-graph edge, so the situation is not too complicated.  If we swap columns and row reduce the original matrix so that the original columns 1 and 5 are not part of the identity portion of the row reduced matrix, then we fall into the first case of the theorem when adding our new edge.  Specifically our new matrix will have the form
\[
  \begin{pmatrix} I_{\rk G} & 0 & C \\ 0 & 1 & v \end{pmatrix} 
\]
If we made other choices in row reduction we could end up in other cases.  When adding subsequent additional edges we can be forced into the other cases.

\subsection{From matroid to Feynman integral}

We can go from a matrix like \eqref{big matrix}, or any matrix of the form \eqref{matrix form}, (along with the information of which edges are external) back to a scalar Feynman integral by assigning a momentum variable for each edge and then applying Feynman rules in the sense that each internal edge contributes a factor $1/k^2$ 
raised to some complex power and each row of the matrix contributes a delta function corresponding to reading across the row.  The rows generate the set of momentum preserving subsets of edges, and so the counting works exactly as in the graph case even though we have no notion of vertices.  Alternately, we can build parametric Feynman integrals from the first and second Symanzik polynomials in the usual way.

In order to convert a real-representable matroid back into a Feynman integral, first notice that row operations do not change the Feynman integral since matrices which differ only by row operations simply lead to different but equivalent products of delta functions.  Rows of all zeros simply contribute nothing at all.  Column swaps and column scalings simply rename or scale the internal momenta and so do not change the Feynman integral or scale the entire integral.  Thus equivalent representations lead to essentially the same Feynman integral.   

By Theorem \ref{prop nice} the matroids we are interested always have a representation in the form $(I|D)$ with $D$ having entries $0$ or $\pm 1$; by the equivalent representation result of Brylawski and Lucas (see Proposition \ref{prop unique}) such representations are equivalent given that we have fixed labels for the edges.  So even if our matroids are not uniquely representable they have favoured representations which are all equivalent.  Furthermore for matroids in this form, only column scalings by $\pm 1$ will arise, so we do not have the scale ambiguity mentioned in the previous paragraph.  So we can build our Feynman integrals in either momentum space or parametric space out of these representations.  Thus we can speak of Feynman integrals associated to such matroids, and we can decompose tensor integrals into scalar integrals of matroids.

In the big example the matroid, which is represented by \eqref{big matrix}, is cographic, so in particular it is regular and so uniquely representable.

Finally, note that weights in the base expansion of the first Symanzik polynomial, as in \eqref{weights}, really do come up in physically relevant cases, and hence the matroid is not always regular, and so in particular is not always cographic.  Consider the complete bipartite graph $K_{3,3}$.  It is represented by the matrix
\[
\begin{pmatrix} 
1 & 0 &  0 & 1 & 1 & 0 & 0 & 0 & 0 \\
0 & 1 &  0 & 0 & 0 & 1 & 1 & 0 & 0 \\
0 & 0 & 1 & 0&  0&  0&  0&  1&  1 \\
0 & 0 & 0 &-1&  0& -1&  0& -1&  0 \\
0 & 0 & 0 & 0& -1&  0& -1&  0& -1 
\end{pmatrix}
\]
Row reduced
\[
\begin{pmatrix} 
1 & 0 &  0 & 0 & 0 & -1 & -1 & -1 & -1 \\
0 & 1 &  0 & 0 & 0 & 1 & 1 & 0 & 0 \\
0 & 0 & 1 & 0&  0&  0&  0&  1&  1 \\
0 & 0 & 0 & 1&  0&  1&  0&  1&  0 \\
0 & 0 & 0 & 0&  1&  0&  1&  0&  1 
\end{pmatrix}
\]
Pick edges 7 and 8, which do not share a vertex.  There are two possibilities for the coextension which are consistent with the preceding discussion. 
\[
\begin{pmatrix} 
1 & 0 &  0 & 0 & 0 & 0 & -1 & -1 & -1 & -1 \\
0 & 1 &  0 & 0 & 0 & 0 & 1 & 1 & 0 & 0 \\
0 & 0 & 1 & 0&  0&  0 & 0&  0&  1&  1 \\
0 & 0 & 0 & 1&  0&  0 & 1&  0&  1&  0 \\
0 & 0 & 0 & 0&  1&  0 & 0&  1&  0&  1 \\
0 & 0 & 0 & 0 & 0 & 1 & 0 & 1 & 1 & 0
\end{pmatrix}
\quad \text{and} \quad
\begin{pmatrix} 
1 & 0 &  0 & 0 & 0 & 0 & -1 & -1 & -1 & -1 \\
0 & 1 &  0 & 0 & 0 & 0 & 1 & 1 & 0 & 0 \\
0 & 0 & 1 & 0&  0&  0 & 0&  0&  1&  1 \\
0 & 0 & 0 & 1&  0&  0 & 1&  0&  1&  0 \\
0 & 0 & 0 & 0&  1&  0 & 0&  1&  0&  1 \\
0 & 0 & 0 & 0 & 0 & 1 & 0 & 1 & -1 & 0
\end{pmatrix}
\]
In the first case the submatrix with all rows and with columns 2,5,6,8,9,10 has determinant $2$ while in the second case the submatrix with all rows and columns 2,3,4,5,8,9 has determinant $2$.  So in either case a coefficient of $4$ will appear in the first Symanzik polynomial which can be verified by direct computation.   Both of these matroids are thus not regular.  Computing with Macek \cite{macek} we can check that these are representable over $\mathbb{F}_3$ and $\mathbb{R}$ but not $\mathbb{F}_4$ and hence by results of Whittle (see Proposition \ref{prop unique}) are none-the-less uniquely representable.

\section{Discussion}

Generalizing Feynman graphs to matroids gives us combinatorial interpretations for scalar master integrals, without sacrificing the graph based tools and definitions we need, such as 1PI, duality, contraction, and deletion.  In fact it improves these tools in that duality becomes defined for all graphs.

Moving to matroids also suggests a hierarchy of difficulty.  Planar graphs are most straightforward; both they and their duals are graphs.  General graphs and cographs come next; cographs are less familiar but they behave very much as graphs do.  Cographs are a natural next term for any series which begins with a planar piece and doesn't continue with graphs themselves.
Very slightly more than this is the class of regular matroids.  Regular matroids are nice in many ways, notably they are uniquely representable over every field and they have a matrix-tree theorem identical to that of graphs; typically one's graph based intuition is valid.  Finally, the most general matroids we need are more subtle, but do always have a nice representation in the form $(I|D)$ with $D$ having entries $0, 1, -1$.

\bibliographystyle{plain}
\bibliography{main}

\begin{thebibliography}{10}

\bibitem{bek}
Spencer Bloch, H\'el\`ene Esnault, and Dirk Kreimer.
\newblock On motives associated to graph polynomials.
\newblock {\em Commun. Math. Phys.}, 267:181--225, 2006.
\newblock arXiv:math/0510011v1 [math.AG].

\bibitem{BlK1loop}
Spencer Bloch and Dirk Kreimer.
\newblock Feynman amplitudes and {L}andau singularities for 1-loop graphs.
\newblock arXiv:1007.0338.

\bibitem{BW}
Christian Bogner and Stefan Weinzierl.
\newblock Feynman graph polynomials.
\newblock arXiv:1002.3458.

\bibitem{BKphi4}
D.J. Broadhurst and D.~Kreimer.
\newblock Knots and numbers in $\phi^4$ theory to 7 loops and beyond.
\newblock {\em Int.J.Mod.Phys.}, C6(519-524), 1995.
\newblock arXiv:hep-ph/9504352.

\bibitem{Brbig}
Francis Brown.
\newblock On the periods of some {F}eynman integrals.
\newblock arXiv:0910.0114.

\bibitem{BrS}
Francis Brown and Oliver Schnetz.
\newblock A {K3} in $\phi^4$.
\newblock arXiv:1006.4064.

\bibitem{BrY}
Francis Brown and Karen Yeats.
\newblock Spanning forest polynomials and the transcendental weight of
  {F}eynman graphs.
\newblock arXiv:0910.5429.

\bibitem{Bmat}
Thomas Brylawski.
\newblock {\em Theory of Matroids}, chapter 7: Constructions.
\newblock Encyclopedia of Mathematics and its applications. Cambridge
  University Press, 1986.

\bibitem{ChetTkach}
K.~G. Chetyrkin and F.~V. Tkachov.
\newblock Integration by parts: The algorithm to calculate beta functions in 4
  loops.
\newblock {\em Nucl. Phys. B}, 192:159--204, 1981.

\bibitem{macek}
Petr Hlin\v{e}n\'{y}.
\newblock The \textsc{Macek} program.
\newblock \url{http://www.mcs.vuw.ac.nz/research/macek}, 2007.
\newblock version 1.2.11.

\bibitem{Laporta}
S.~Laporta.
\newblock High-precision calculation of multi-loop {F}eynman integrals by
  difference equations.
\newblock {\em Int.\ J.\ Mod.\ Phys.\ A}, 15:5087--5159, 2000.
\newblock arXiv:hep-ph/0102033.

\bibitem{L10}
R.~N. Lee.
\newblock Calculating multiloop integrals using dimensional recurrence relation
  and {D}-analyticity.
\newblock arXiv:1007.2256.

\bibitem{L09}
R.~N. Lee.
\newblock Space-time dimensionality {D} as complex variable: calculating loop
  integrals using dimensional recurrence relation and analytical properties
  with respect to {D}.
\newblock {\em Nucl.~Phys.~B}, 830:474--492, 2010.
\newblock arXiv:0911.0252.

\bibitem{ox}
James Oxley.
\newblock {\em Matroid Theory}.
\newblock Oxford, 1992.

\bibitem{Pa}
Eric Patterson.
\newblock On the singular structure of graph hypersurfaces.
\newblock arXiv:1004.5166.

\bibitem{Sphi4}
Oliver Schnetz.
\newblock Quantum periods: A census of $\phi^4$-transcendentals.
\newblock arXiv:0801.2856.

\bibitem{SmSm}
A.~V. Smirnov and V.~A. Smirnov.
\newblock On the reduction of {F}eynman integrals to master integrals.
\newblock In {\em Proceedings of ACAT}, page~85, 2007.
\newblock arXiv:0707.3993.

\bibitem{Tarasov}
O.~V. Tarasov.
\newblock Connection between {F}eynman integrals having different values of the
  space-time dimension.
\newblock {\em Phys. Rev. D}, 54:6479--6490, 1996.
\newblock arXiv:hep-th/9606018.

\bibitem{Wgf3}
Geoff Whittle.
\newblock On matroids representable over {$GF(3)$} and other fields.
\newblock {\em Trans. AMS.}, 349(2):579--603, 1997.

\end{thebibliography}

\end{document}